\newtheorem{thm}{Theorem}
\newtheorem{lem}{Lemma}
\newtheorem{dfn}{Definition}
\newcommand{\smooth}{\delta}
\newcommand{\err}{\epsilon}
\newcommand{\dd}{\textnormal{d}}
\begin{document}

\title{Simple Channel Coding Bounds}

\author{\authorblockN{Ligong Wang}
\authorblockA{Signal and Information Processing Laboratory\\
ETH Zurich, Switzerland\\
\texttt{ wang@isi.ee.ethz.ch}}\and
\authorblockN{Roger Colbeck}
\authorblockA{Institute for Theoretical Physics, and\\
Institute of Theoretical Computer Science\\
ETH Zurich, Switzerland\\
\texttt{colbeck@phys.ethz.ch}}\and
\authorblockN{Renato Renner}
\authorblockA{Institute for Theoretical Physics\\
ETH Zurich, Switzerland\\
\texttt{renner@phys.ethz.ch}}}

\maketitle

\begin{abstract}
  New channel coding converse and achievability bounds are
  derived for a single use of an arbitrary channel. Both bounds are
  expressed using a quantity called the ``smooth 
  $0$-divergence'', which is a generalization of R\'enyi's divergence
  of order~$0$. The bounds are also studied in the limit of large
  block-lengths. In particular, they combine 
  to give a general capacity formula which is equivalent to the one
  derived by 
  Verd\'u and Han.
\end{abstract}

\section{Introduction}



We consider the problem of transmitting information through a channel.
A channel consists of an input alphabet $\set{X}$, an output alphabet
$\set{Y}$, where $\set{X}$ and $\set{Y}$ are each equipped with a
$\sigma$-Algebra, and the channel law which is a stochastic kernel
$P_{Y|X}$ from $\set{X}$ to
$\set{Y}$.  We consider average error probabilities throughout this
paper,\footnote{Note that Shannon's method of obtaining codes that
  have small  
  maximum error probabilities from those that have small average error
  probabilities \cite{shannon48} can be applied to our codes. We shall
  not examine other such methods which might lead to tighter bounds for
  finite block-lengths.}  thus an
\emph{$(m,\err)$-code} consists of an encoder $f: 
\{1,\ldots,m\}\to\set{X}, i\mapsto x$ and a decoder
$g:\set{Y}\to\{1,\ldots,m\}, y \mapsto \hat{i}$ such that the
probability that
$i\neq \hat{i}$ is smaller than or equal 
to $\err$, assuming that the message is uniformly
distributed. Our aim is to derive upper and lower bounds
on the largest $m$ given $\err>0$ such that an $(m,\err)$-code
exists for a given channel.


Such bounds are different from those in Shannon's original work
\cite{shannon48} in the sense that they are
nonasymptotic and do not rely on any channel structure such as
memorylessness or information stability. 

Previous works have demonstrated the advantages of such nonasymptotic
bounds. They can lead to more general channel capacity formulas
\cite{verduhan94} as well as giving tight approximations to the maximal
rate achievable for a desired error probability and a fixed block-length
\cite{polyanskiypoorverdu08}. 

In this paper we prove a new converse bound and a new achievability
bound. They are asymptotically tight in the sense that they combine to
give a general capacity formula that is equivalent to
\cite[(1.4)]{verduhan94}. We are mainly interested in
proving simple bounds which offer theoretical intuitions into channel
coding problems. It is not our main concern to derive bounds which
outperform the existing ones in estimating the largest achievable
rates in finite block-length scenarios. In fact, as will be seen in
Section~\ref{sec:compare}, the new achievability bound is less tight than
the one in~\cite{polyanskiypoorverdu08}, though the differences are
small. 

Both new bounds are expressed using a quantity which we call the \emph{smooth
  $0$-divergence}, denoted as $D_0^\smooth(\cdot\| \cdot)$ where $\smooth$ is a
positive parameter. This quantity is a generalization of R\'enyi's
divergence of 
order $0$ \cite{renyi61}. Thus, our new bounds demonstrate connections
between the channel coding problem and R\'enyi's divergence of
order~$0$. Various previous works
\cite{gallager65,arimoto77,csiszar95} have shown
connections between channel coding and R\'enyi's information
measures of order~$\alpha$ for $\alpha\ge \frac{1}{2}$. Also relevant
is  \cite{rennerwolfwullschleger06} where channel coding 
bounds were derived using the smooth min-
and max-entropies introduced in~\cite{rennerwolf04}.


As will be seen, proofs of the new bounds are
simple and self-contained. The achievability bound uses random coding
and suboptimal decoding, where the decoding rule can be thought of as
a generalization of Shannon's joint typicality decoding rule
\cite{shannon48}. The converse is proved by simple algebra
combined with the fact that $D_0^\smooth (\cdot\| \cdot)$ satisfies a
Data Processing Theorem. 

The quantity $D_0^\smooth (\cdot\| \cdot)$ has also been defined for
quantum systems 
\cite{datta08,WLGrenner09}. In \cite{WLGrenner09} the
present work is extended to quantum communication channels.

The remainder of this paper is arranged as follows: in
Section~\ref{sec:d0e} we introduce the quantity $D_0^\smooth (\cdot\|
\cdot)$; in Section~\ref{sec:converse} we state and prove the converse
theorem; in Section~\ref{sec:achiev} we state and prove the
achievability theorem; in Section~\ref{sec:asymp} we analyze
the bounds asymptotically for an arbitrary channel to study its
capacity and $\err$-capacity; finally, in Section~\ref{sec:compare} we
compare numerical results obtained using our new achievability bound
with some existing bounds.

\section{The Quantity $D_0^\smooth (\cdot\| \cdot)$}\label{sec:d0e}

In \cite{renyi61} R\'enyi defined entropies and
divergences of order $\alpha$ for every $\alpha>0$. We denote these
$H_\alpha(\cdot)$ and $D_\alpha(\cdot \| \cdot)$ respectively. They are
generalizations of Shannon's entropy $H(\cdot)$ and relative 
entropy $D(\cdot\|\cdot)$.

Letting $\alpha$ tend to zero in $D_\alpha(\cdot\|\cdot)$ yields the following
definition of $D_0(\cdot\| \cdot)$.

\begin{dfn}[R\'enyi's Divergence of Order $0$]
  	For $P$ and $Q$ which are two probability measures on $(\Omega,
        \set{F})$, $D_0(P\| Q)$ is defined as
	\begin{equation}\label{eqn:1}
		D_0(P\| Q)  =  -\log \int_{\supp(P)} \d Q,
	\end{equation}
        where we use the convention $\log 0 = -\infty$.\footnote{We
          remark that for 
          distributions defined on a finite alphabet, $\set{X}$, the
          equivalent of \eqref{eqn:1} is
          $D_0(P||Q)=-\log\sum_{x:P(x)>0}Q(x)$.}
\end{dfn}

We generalize $D_0(\cdot\| \cdot)$ to define $D_0^\smooth(\cdot\|
\cdot)$ as follows.

\begin{dfn}[Smooth $0$-Divergence]\label{dfn:d0e}
  	Let $P$ and $Q$ be two probability measures on $(\Omega,
        \set{F})$. For $\smooth>0$, $D_0^\smooth (P\| Q)$ is defined as
	\begin{equation}\label{eq:d0e}
		D_0^\smooth  (P\| Q) = \sup_{\substack{\Phi: \Omega
                    \to [0,1]\\\int_\Omega \Phi\d P \ge 1- \smooth}}
                \left\{ - \log \int_\Omega \Phi\d Q \right\}.
	\end{equation}
\end{dfn}

\emph{Remark:} To achieve the supremum in~\eqref{eq:d0e}, one should
choose $\Phi$ to be large (equal to $1$) for large $\frac{\d P}{\d Q}$
and vice versa.


\begin{lem}[Properties of $D_0^\smooth(\cdot\|\cdot)$]\label{lem:DPI}\ 
\begin{enumerate} 
  \item $D_0^\smooth(P\| Q)$ is monotonically nondecreasing in
    $\smooth$. 
  \item When $\delta=0$, the supremum in \eqref{eq:d0e} is achieved by
    choosing $\Phi$ to be $1$ on $\supp(P)$ and to be $0$
    elsewhere, which yields $D_0^0(P\| Q) = D_0(P\| Q)$.
  \item If $P$ has no point masses, then the supremum in \eqref{eq:d0e}
    is achieved by letting $\Phi$ take value in $\{0,1\}$ only and
    \begin{equation*}
      D_0^\smooth  (P\| Q) = \sup_{P': \frac{1}{2}\|P'-P\|_1 \le \smooth} D_0(P'
      \| Q).
    \end{equation*}
  \item (Data Processing Theorem) Let~$P$ and~$Q$ be probability
    measures on $(\Omega, \set{F})$, and 
  let~$W$ be a stochastic kernel from $(\Omega, \set{F})$ to
  $(\Omega', \set{F}')$. For all $\delta>0$, we have
  \begin{equation}\label{eq:DPI}
    D_0^\smooth (P\| Q) \ge D_0^\smooth (W\circ P\| W\circ Q),
  \end{equation}
  where $W\circ P$ denotes the probability distribution on
  $(\Omega', \set{F}')$ induced by $P$ and $W$ and similarly for
  $W\circ Q$.
  \end{enumerate}
\end{lem}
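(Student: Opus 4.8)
The plan is to treat the four items separately; the first, second and fourth are one-liners from the definition, and essentially all the content sits in the third.

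\emph{Items 1, 2 and 4.} For item 1, increasing $\smooth$ enlarges the set $\{\Phi:\Omega\to[0,1]\mid\int_\Omega\Phi\,\dd P\ge 1-\smooth\}$ over which the supremum in \eqref{eq:d0e} is taken, so the supremum is nondecreasing. For item 2, since $\Phi\le 1$ and $P$ is a probability measure, $\int_\Omega\Phi\,\dd P\le 1$, so the constraint with $\smooth=0$ forces $\Phi=1$ $P$-almost everywhere; choosing in addition $\Phi=0$ off $\supp(P)$ makes $\int_\Omega\Phi\,\dd Q=\int_{\supp(P)}\dd Q$ as small as possible, and applying $-\log$ gives $D_0^0(P\|Q)=D_0(P\|Q)$. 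For item 4 I would use the variational form directly: given measurable $\Phi':\Omega'\to[0,1]$ with $\int_{\Omega'}\Phi'\,\dd(W\circ P)\ge 1-\smooth$, define its pullback $\Phi(\omega):=\int_{\Omega'}\Phi'(\omega')\,W(\dd\omega'\,|\,\omega)\in[0,1]$. By the definition of $W\circ P$ together with Fubini, $\int_\Omega\Phi\,\dd P=\int_{\Omega'}\Phi'\,\dd(W\circ P)\ge 1-\smooth$, so $\Phi$ is feasible for $D_0^\smooth(P\|Q)$, and likewise $\int_\Omega\Phi\,\dd Q=\int_{\Omega'}\Phi'\,\dd(W\circ Q)$; taking suprema gives \eqref{eq:DPI}.

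\emph{Item 3, easy direction.} If $\frac{1}{2}\|P'-P\|_1\le\smooth$, set $A:=\supp(P')$. Then $P(A^c)=P(A^c)-P'(A^c)\le\frac{1}{2}\|P'-P\|_1\le\smooth$, so $\Phi=\mathbf{1}_A$ is feasible, and it attains $-\log Q(A)\ge-\log\int_{\supp(P')}\dd Q=D_0(P'\|Q)$. Taking the supremum over such $P'$ gives $D_0^\smooth(P\|Q)\ge\sup_{P':\,\frac12\|P'-P\|_1\le\smooth}D_0(P'\|Q)$. For the reverse inequality — and at the same time for the claim that a $\{0,1\}$-valued $\Phi$ attains the supremum — I would show that any feasible $\Phi$ can be rounded to an indicator $\mathbf{1}_A$ with $P(A)\ge 1-\smooth$ and $Q(A)\le\int_\Omega\Phi\,\dd Q$.

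\emph{Item 3, rounding.} This is a Neyman--Pearson argument. Writing $\mu=P+Q$ and letting $r$ be the likelihood ratio $\dd P/\dd\mu$ over $\dd Q/\dd\mu$, the minimum of $\int_\Omega\Phi\,\dd Q$ subject to $\int_\Omega\Phi\,\dd P\ge 1-\smooth$ is attained by a threshold test $\mathbf{1}_{\{r>t\}}$ together with possible randomization on the level set $\{r=t\}$; since $P$ has no point masses, that randomization can be replaced by the indicator of a measurable subset of $\{r=t\}$ having the appropriate $P$-measure (a non-atomic measure assumes every intermediate value on subsets), and this changes neither integral because $P$ and $Q$ restricted to $\{r=t\}$ are proportional. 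For the resulting set $A$ one then checks, as in the easy direction, that $P':=P(\cdot\cap A)/P(A)$ has $\supp(P')\subseteq A$ and $\frac{1}{2}\|P'-P\|_1=P(A^c)\le\smooth$, so $D_0(P'\|Q)\ge-\log Q(A)\ge-\log\int_\Omega\Phi\,\dd Q$. Combining the two directions yields the claimed identity, and the indicator $\mathbf{1}_A$ produced by the rounding is a $\{0,1\}$-valued $\Phi$ that is at least as good as the given one, hence the supremum is attained by such a $\Phi$.

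\emph{Main obstacle.} The rounding step in item 3 is where all the work is: without the non-atomicity hypothesis a general $[0,1]$-valued $\Phi$ can be strictly better than every indicator, so one genuinely needs both the Neyman--Pearson structure and the absence of point masses. Some care is also required to guarantee that a maximizing $\Phi$ exists before invoking the characterization of its form (e.g.\ by weak-$*$ compactness of $\{\Phi:0\le\Phi\le 1\}$ in $L^\infty(\mu)=L^1(\mu)^*$, on which $\Phi\mapsto\int_\Omega\Phi\,\dd P$ and $\Phi\mapsto\int_\Omega\Phi\,\dd Q$ are continuous).
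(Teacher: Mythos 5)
Your argument is correct and, where the paper supplies a proof, it is identical: the pullback $\Phi(\omega)=\int_{\Omega'}\Phi'(\omega')\,W(\dd\omega'\,|\,\omega)$ in item~4 is precisely the paper's construction. The paper dismisses items 1--3 as immediate consequences of the definition and of the remark following it; your Neyman--Pearson rounding for item~3 simply fleshes out what that remark (take $\Phi=1$ where $\dd P/\dd Q$ is large) gestures at, with the non-atomicity hypothesis invoked exactly where it is needed.
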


\begin{proof}
  The first three properties are immediate consequences of the
  definition and the remark. We therefore only prove~4).
  
  For any $\Phi':\Omega' \to [0,1]$ such that
  \begin{equation*}
    \int_{\Omega'} \Phi' \d (W\circ P) \ge 1-\smooth,
  \end{equation*}
  we choose $\Phi:\Omega\to \Reals$ to be
  \begin{equation*}
    \Phi(\omega) = \int_{\Omega'} \Phi'(\omega') W(\dd \omega'|\omega),
    \quad \omega \in \Omega.
  \end{equation*}
  then we have that $\Phi(\omega)\in [0,1]$ for all $\omega \in
  \Omega$. Further,
  \begin{IEEEeqnarray*}{rCl}
    \int_{\Omega}\Phi\d P & = &\int_{\Omega'} \Phi'\d (W\circ P) \ge
    1-\delta,\\
    \int_{\Omega}\Phi\d Q & = &\int_{\Omega'} \Phi'\d (W\circ Q).
  \end{IEEEeqnarray*}
  Thus we have
  \begin{IEEEeqnarray*}{rCl}
    \lefteqn{\sup_{\substack{\Phi: \Omega
                    \to [0,1]\\\int_\Omega \Phi\d P \ge 1- \smooth}}
                \left\{ - \log \int_\Omega \Phi\d Q \right\}}~~~~~~~~~~\\
               & \ge& \sup_{\substack{\Phi': \Omega'
                    \to [0,1]\\\int_{\Omega'} \Phi'\d (W\circ P) \ge 1- \smooth}}
                \left\{ - \log \int_{\Omega'} \Phi'\d (W\circ Q) \right\},
  \end{IEEEeqnarray*}
  which proves~4).
\end{proof}

A relation between $D_0^\smooth(P\| Q)$, $D(P\| Q)$ and the information
spectrum methods \cite{hanverdu93,han03} can be seen in the next
lemma. A slightly different quantum version of this theorem has been
proven in \cite{datta08}. We include a classical proof of it in the
Appendix.

\begin{lem}\label{lem:infrate}
	Let $P_n$ and $Q_n$ be probability measures on $\left(\Omega_n, \set{F}_n\right)$ for every $n\in\Naturals$. Then
	\begin{equation}\label{eq:infrate}
		\lim_{ \smooth  \downarrow 0} \varliminf_{n \to \infty} \frac{1}{n} D_0^{ \smooth } \left(P_n \| Q_n \right)= \left\{ P_n \right\}\textnormal{-}\varliminf_{n \to \infty} \frac{1}{n} \log \frac{\d P_n}{\d Q_n}.
	\end{equation}
	Here $\{P_n\}\textnormal{-}\varliminf$ means the $\liminf$ in
        probability with respect to the sequence of probability
        measures $\{P_n\}$, that is, for a real stochastic process $\{Z_n\}$,
	\begin{equation*}
		\left\{ P_n \right\}\textnormal{-}\varliminf_{n \to \infty} Z_n \triangleq \sup \left\{ a \in \Reals: \lim_{n\to \infty} P_n\left( \{ Z_n < a \} \right) = 0 \right\}.
	\end{equation*}
	In particular, let $P^{\times n}$ and $Q^{\times n}$ denote the product distributions of $P$ and $Q$ respectively on $\left(\Omega^{\otimes n}, \set{F}^{\otimes n}\right)$, then
	\begin{equation}\label{eq:iidrate}
		\lim_{ \smooth \downarrow 0} \varliminf_{n\to\infty} \frac{1}{n} D_0^{ \smooth }\left(P^{\times n} \| Q^{\times n} \right) = D(P\| Q).
	\end{equation}
\end{lem}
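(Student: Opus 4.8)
The plan is to prove the general identity \eqref{eq:infrate} and then obtain \eqref{eq:iidrate} as a special case via the weak law of large numbers. Write $R$ for the right-hand side of \eqref{eq:infrate}, i.e.\ the $\{P_n\}$-$\varliminf$ of $\frac1n\log\frac{\d P_n}{\d Q_n}$, and write $L$ for the left-hand side. I would prove the two inequalities $L\ge R$ and $L\le R$ separately, in each case using the natural near-optimal choice of $\Phi$ in Definition~\ref{dfn:d0e} suggested by the Remark: a threshold test on the log-likelihood ratio $Z_n\triangleq\frac1n\log\frac{\d P_n}{\d Q_n}$.

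For the direction $L\ge R$: fix any $a<R$ and any $\smooth>0$. By definition of the $\{P_n\}$-$\varliminf$, $P_n(\{Z_n<a\})\to0$, so for $n$ large enough $P_n(\{Z_n\ge a\})\ge1-\smooth$. Take $\Phi_n=\mathbf 1_{\{Z_n\ge a\}}$; this is an admissible test, and on the event $\{Z_n\ge a\}$ we have $\d Q_n\le e^{-na}\,\d P_n$, hence $\int\Phi_n\,\d Q_n\le e^{-na}$, giving $\frac1n D_0^\smooth(P_n\|Q_n)\ge -\frac1n\log\int\Phi_n\,\d Q_n\ge a$. Taking $\varliminf_n$ then letting $\smooth\downarrow0$ and then $a\uparrow R$ yields $L\ge R$.

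For the direction $L\le R$: fix $a>R$. Then $\lim_n P_n(\{Z_n<a\})>0$ is false in general, but there is $\eta>0$ with $P_n(\{Z_n<a\})\ge\eta$ along a subsequence; more usefully, since $a>R$ there is $\beta>0$ and a subsequence along which $P_n(\{Z_n\ge a\})\le 1-\beta$. I would take $\smooth<\beta$: then any admissible $\Phi_n$ must place mass at least $1-\smooth$ under $P_n$, so it must put weight at least $\beta-\smooth>0$ on the set $\{Z_n<a\}$, and on that set $\d P_n< e^{na}\,\d Q_n$, which forces $\int\Phi_n\,\d Q_n\ge e^{-na}\int_{\{Z_n<a\}}\Phi_n\,\d P_n\ge (\beta-\smooth)e^{-na}$. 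Hence $-\frac1n\log\int\Phi_n\,\d Q_n\le a-\frac1n\log(\beta-\smooth)$, and taking the supremum over $\Phi_n$, then $\varliminf_n$ along the subsequence, gives $\varliminf_n\frac1n D_0^\smooth(P_n\|Q_n)\le a$; letting $\smooth\downarrow0$ and $a\downarrow R$ completes it. (The subsequence subtlety is exactly why the statement uses $\varliminf$; I would make sure the $\varliminf$ over the full sequence is controlled by the subsequence where the bound is tight.)

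Finally, for \eqref{eq:iidrate}, with $P_n=P^{\times n}$, $Q_n=Q^{\times n}$ we have $Z_n=\frac1n\sum_{i=1}^n\log\frac{\d P}{\d Q}(\omega_i)$, an average of i.i.d.\ terms with mean $D(P\|Q)$ (allowing the value $+\infty$, and noting $D_0(P\|Q)\le D(P\|Q)$ handles the degenerate cases); the weak law of large numbers gives $Z_n\to D(P\|Q)$ in $P^{\times n}$-probability, so the $\{P_n\}$-$\varliminf$ equals $D(P\|Q)$, and \eqref{eq:infrate} finishes the proof. The main obstacle I anticipate is the bookkeeping in the $L\le R$ direction: extracting the right subsequence, choosing $\smooth$ small relative to the mass deficit $\beta$, and making sure the residual term $\frac1n\log(\beta-\smooth)\to0$ does not interfere — the inequality direction and the order of the limits ($\varliminf_n$ inside, $\smooth\downarrow0$ outside) have to be respected throughout.
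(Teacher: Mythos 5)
Your proof is correct and follows essentially the same route as the paper's: threshold tests $\Phi_n=\mathbf 1_{\{Z_n\ge a\}}$ on the normalized log-likelihood ratio for the lower bound, and for the upper bound bounding $\int\Phi_n\,\dd Q_n$ from below via the mass an admissible $\Phi_n$ must put on $\{Z_n<a\}$ along a subsequence where that set retains probability bounded away from zero, then invoking the law of large numbers for the i.i.d.\ case. Your handling of the supremum over $\Phi_n$ in the upper-bound direction (bounding it uniformly for each $n$ before taking $\varliminf$) is in fact slightly cleaner than the paper's quantifier order, but the underlying idea is identical.
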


\begin{proof}
	See Appendix.
\end{proof}

\section{The Converse}\label{sec:converse}

We first state and prove a lemma.
\begin{lem}\label{lem:converse}
  Let $M$ be uniformly distributed over $\{1,\ldots,m\}$ and let
  $\hat{M}$ also take value in $\{1,\ldots,m\}$. If the
  probability that $\hat{M} \neq M$ is at most $\epsilon$, then 
  \begin{equation*}
    \log m \le D_0^\err \left(P_{M\hat{M}} \| P_M\times P_{\hat{M}}\right),
  \end{equation*}
  where $P_{M\hat{M}}$ denotes the joint distribution of $M$ and
  $\hat{M}$ while $P_M$ and $P_{\hat{M}}$ denote its marginals.
\end{lem}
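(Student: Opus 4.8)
The plan is to prove the inequality directly from Definition~\ref{dfn:d0e} by exhibiting a single admissible test function. Take $\Omega = \{1,\ldots,m\}\times\{1,\ldots,m\}$, $P = P_{M\hat M}$ and $Q = P_M\times P_{\hat M}$, and let $\Phi\colon\Omega\to[0,1]$ be the indicator of the diagonal, i.e.\ $\Phi(i,j) = 1$ if $i = j$ and $\Phi(i,j) = 0$ otherwise. This is exactly the choice suggested by the Remark following Definition~\ref{dfn:d0e}: the likelihood ratio $\d P/\d Q$ is large on the diagonal, because a code with small error probability forces $P_{M\hat M}$ to place almost all of its mass there, whereas $Q$ spreads its mass over the whole product space.

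First I would verify that $\Phi$ is admissible for $D_0^\err$, i.e.\ that it meets the constraint in \eqref{eq:d0e} with $\smooth = \err$. Indeed,
\[
  \int_\Omega \Phi\,\d P \;=\; \Pr[M = \hat M] \;=\; 1 - \Pr[M\neq\hat M] \;\ge\; 1-\err .
\]
Next I would evaluate the objective on this $\Phi$. Using that $M$ is uniform on $\{1,\ldots,m\}$ and that $P_{\hat M}$ is a probability distribution,
\[
  \int_\Omega \Phi\,\d Q \;=\; \sum_{i=1}^m P_M(i)\,P_{\hat M}(i) \;=\; \frac{1}{m}\sum_{i=1}^m P_{\hat M}(i) \;=\; \frac{1}{m}.
\]
Hence $-\log\int_\Omega\Phi\,\d Q = \log m$. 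Since $D_0^\err(P\|Q)$ is the supremum of $-\log\int_\Omega\Phi\,\d Q$ over all admissible $\Phi$, it is at least $\log m$, which is the assertion of the lemma.

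There is essentially no hard step here: the content of the proof is the observation that the diagonal indicator is the right test function, after which admissibility is immediate from the error hypothesis and the value $\log m$ falls out of the uniformity of $M$ together with normalization of $P_{\hat M}$. (Measurability of the diagonal is automatic, since $\{1,\ldots,m\}$ is finite.) The only thing to keep in mind for the sequel is that this lemma becomes the converse theorem once it is combined with the Data Processing Theorem, Lemma~\ref{lem:DPI}(4), applied to the channel together with the encoder and decoder, which is what converts a statement about $M,\hat M$ into one about the channel $P_{Y|X}$.
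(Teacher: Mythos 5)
Your proof is correct and takes essentially the same approach as the paper: choose $\Phi$ to be the indicator of the diagonal event $\{M=\hat M\}$, verify admissibility from the error hypothesis, and compute $\int\Phi\,\d(P_M\times P_{\hat M})=1/m$ using uniformity of $M$.
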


\begin{proof}
  Let $\Phi$ be the indicator of the event $M=\hat{M}$, i.e.,
  \begin{equation*}
    \Phi(i,\hat{i}) \triangleq \begin{cases} 1,& i=\hat{i}\\0,
      & \textnormal{otherwise} \end{cases} \quad i,\hat{i}\in\{1,\ldots,m\}.
  \end{equation*}
  Because, by assumption, the probability that $M\neq \hat{M}$ is
  not larger than $\err$, we have
  \begin{equation*}
    \int_{\{1,\ldots,m\}^{\otimes 2}} \Phi \d P_{M\hat{M}} \ge 1-\err.
  \end{equation*}
  Thus, to prove the lemma, it suffices to show that
  \begin{equation}\label{eq:upperlem}
    \log m \le -\log \int_{\{1,\ldots,m\}^{\otimes 2}} \Phi \d (P_M \times
    P_{\hat{M}}).
  \end{equation}
  To justify this we write:
  \begin{IEEEeqnarray*}{rCl}
    \int_{\{1,\ldots,m\}^{\otimes 2}} \Phi \d (P_M \times
    P_{\hat{M}}) & = & \sum_{i=1}^m P_M\bigl(\{i\}\bigr)\cdot
    P_{\hat{M}} \bigl(\{i\}\bigr)\\
    & = & \sum_{i=1}^m \frac{1}{m}\cdot P_{\hat{M}}
    \bigl(\{i\}\bigr)\\
    & = & \frac{1}{m},
  \end{IEEEeqnarray*}
  from which it follows that \eqref{eq:upperlem} is satisfied with equality.
\end{proof}

\begin{thm}[Converse]\label{thm:converse}
  An $(m,\err)$-code satisfies
  \begin{equation}\label{eq:upper}
	\log m \le \sup_{P_X} D_0^\err \left(P_{XY} \| P_X \times P_Y \right),
  \end{equation}
  where $P_{XY}$ and $P_Y$ are probability distributions on
  $\set{X}\times\set{Y}$ and $\set{Y}$, respectively, induced by $P_X$
  and the channel law.
\end{thm}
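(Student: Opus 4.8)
The plan is to deduce the theorem from Lemma~\ref{lem:converse} by transporting its conclusion ``back through the channel'' with the Data Processing Theorem (part~4 of Lemma~\ref{lem:DPI}).

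First I would fix an $(m,\err)$-code with encoder $f$ and decoder $g$ and let $M$ be uniform on $\{1,\ldots,m\}$. Feeding $X=f(M)$ into the channel produces $Y\sim P_{Y|X}$, and the decoder returns $\hat M=g(Y)$; by definition of an $(m,\err)$-code, $\Pr[\hat M\neq M]\le\err$. Write $P_X$ for the induced input distribution (uniform on the codewords), $P_Y$ for the induced output distribution, and $P_{XY}$ for the joint law. Lemma~\ref{lem:converse} applied to the pair $(M,\hat M)$ gives
\[
\log m \le D_0^\err\!\left(P_{M\hat M}\,\|\,P_M\times P_{\hat M}\right),
\]
so it remains to show that
\[
D_0^\err\!\left(P_{M\hat M}\,\|\,P_M\times P_{\hat M}\right)\le D_0^\err\!\left(P_{XY}\,\|\,P_X\times P_Y\right),
\]
after which the theorem follows by taking the supremum over $P_X$ on the right-hand side, since the code induces one particular choice of $P_X$.

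To obtain this inequality I would apply \eqref{eq:DPI} with a stochastic kernel $W$ from $\set{X}\times\set{Y}$ to $\{1,\ldots,m\}\times\{1,\ldots,m\}$ that undoes the encoder on the first coordinate and applies the decoder on the second: $W$ maps $(x,y)$ to $(J,g(y))$, where $J$ is drawn uniformly from $f^{-1}(x)$ (if the codewords are distinct, as one may assume for a code, $J=f^{-1}(x)$ deterministically). The two things to check are (i) $W\circ P_{XY}=P_{M\hat M}$, which holds because $W$ reproduces exactly the Markov chain $M\to X\to Y\to\hat M$, and (ii) $W\circ(P_X\times P_Y)=P_M\times P_{\hat M}$, which holds because the first output coordinate of $W$ depends only on $x$ and the second only on $y$, so $W$ carries a product measure to the product of the two pushed-forward marginals, and those marginals are precisely $P_M$ and $P_{\hat M}$. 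Granting (i) and (ii), \eqref{eq:DPI} yields the displayed bound.

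I do not expect a serious obstacle here — the argument is essentially a repackaging of Lemma~\ref{lem:converse}. The one delicate point is the construction of $W$ and the verification of property~(ii): the inequality would be useless if the reference measure on the left were not mapped to $P_M\times P_{\hat M}$, and the reason it is comes entirely from the fact that the relevant processing factorizes across the $\set{X}$ and $\set{Y}$ coordinates. A secondary, purely bookkeeping issue is allowing a non-injective encoder, which the ``uniform over $f^{-1}(x)$'' device handles cleanly; otherwise one simply assumes distinct codewords and $W$ becomes the deterministic map $(x,y)\mapsto(f^{-1}(x),g(y))$.
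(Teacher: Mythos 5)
Your proposal is correct and follows the same route as the paper: apply Lemma~\ref{lem:converse} to the pair $(M,\hat M)$, then transfer to $(X,Y)$ via the Data Processing Theorem using the Markov chain $M\to X\to Y\to\hat M$, and finally observe that the induced $P_X$ is dominated by the supremum. The paper states the second step in a single line, leaving implicit exactly the kernel $W$ (undo the encoder on the first coordinate, apply the decoder on the second, with the crucial check that a product input yields a product output) that you spell out; your verification of points (i) and (ii), including the uniform-over-$f^{-1}(x)$ device for non-injective encoders, is the content the paper tacitly relies on.
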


\begin{proof}
  Choose $P_X$ to be the distribution induced by the message uniformly
  distributed 
  over $\{1,\ldots,m\}$, then 
  \begin{IEEEeqnarray*}{rCl}
    \log  m & \le & D_0^\err \left(P_{M\hat{M}} \| P_M\times
      P_{\hat{M}}\right)\\
    &\le & D_0^\err \left(P_{XY} \| P_X \times P_Y \right),
  \end{IEEEeqnarray*}
  where the first inequality follows by Lemma~\ref{lem:converse}; the
  second inequality by Lemma~\ref{lem:DPI} Part~4) and the fact that $M\markov
  X \markov Y \markov \hat{M}$ forms a Markov Chain. Theorem
  \ref{thm:converse} follows. 
\end{proof}

\section{Achievability}\label{sec:achiev}

\begin{thm}[Achievability]\label{thm:achievability}
  For any channel, any $\err>0$ and $\err'\in[0,\err)$
  there exists an $(m,\err)$-code satisfying
  \begin{equation}\label{eq:achievability}
    \log m \ge \sup_{P_X} D_0^{\err'}(P_{XY}\| P_X\times P_Y)
    -\log\frac{1}{\err-\err'},
  \end{equation}
  where $P_{XY}$ and $P_Y$ are induced by $P_X$ and the channel law.
\end{thm}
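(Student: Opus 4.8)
The plan is a random-coding argument with a suboptimal, randomized decoder --- the decoding rule being a generalization of Shannon's joint-typicality rule in which the hard typicality test is replaced by a near-optimal soft test $\Phi$ furnished by Definition~\ref{dfn:d0e}. Fix a channel input distribution $P_X$, write $D\triangleq D_0^{\err'}(P_{XY}\|P_X\times P_Y)$, and let $R$ be any real number with $R<D$. By the definition of $D_0^{\err'}$ there is a measurable $\Phi:\set{X}\times\set{Y}\to[0,1]$ for which, writing $\mu\triangleq\int\Phi\,\d(P_X\times P_Y)$,
\begin{equation*}
  \int\Phi\,\d P_{XY}\ \ge\ 1-\err' \qquad\text{and}\qquad -\log\mu\ \ge\ R.
\end{equation*}
Generate codewords $X_1,\dots,X_m$ i.i.d.\ $\sim P_X$. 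On output $y$ the decoder uses auxiliary independent uniform random variables $U_1,\dots,U_m$ on $[0,1]$ (independent of the codebook and the channel), sets $B_i=1$ exactly when $U_i\le\Phi(X_i,y)$, and decodes to the unique index~$i$ with $B_i=1$ when there is exactly one such index; all other outcomes are counted as errors. When $\Phi$ is $\{0,1\}$-valued this is precisely ``decode to the unique jointly typical codeword''.

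I would then bound the average error probability of this random code. Since the codewords are exchangeable and the decoder treats them symmetrically, it suffices to condition on message~$1$ having been sent, so that $Y\sim P_{Y|X}(\cdot\mid X_1)$. A decoding error then forces $B_1=0$, or $B_j=1$ for some $j\neq1$; by a union bound,
\begin{equation*}
  \Pr[\textnormal{error}\mid M=1]\ \le\ \Pr[B_1=0]+\sum_{j=2}^{m}\Pr[B_j=1].
\end{equation*}
The first term equals $1-\int\Phi\,\d P_{XY}\le\err'$. For $j\neq1$ the codeword $X_j$ is independent of $(X_1,Y)$, hence $(X_j,Y)\sim P_X\times P_Y$ and $\Pr[B_j=1]=\int\Phi\,\d(P_X\times P_Y)=\mu$. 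Therefore $\Pr[\textnormal{error}\mid M=1]\le\err'+(m-1)\mu$, and by the symmetry just noted the error averaged over a uniformly distributed message obeys the same bound.

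It remains to pick~$m$. Choosing $m=\lfloor 1+(\err-\err')/\mu\rfloor$ gives $(m-1)\mu\le\err-\err'$, so the average error is at most~$\err$; hence some fixed codebook together with a fixed realization of $U_1,\dots,U_m$ --- which makes the decoder a deterministic map $\set{Y}\to\{1,\dots,m\}$ --- forms an $(m,\err)$-code, and no expurgation is needed. Moreover $m\ge(\err-\err')/\mu$, so $\log m\ge\log(\err-\err')-\log\mu\ge R-\log\frac{1}{\err-\err'}$. Letting $R\uparrow D$ and then taking the supremum over~$P_X$ yields~\eqref{eq:achievability}; a short argument based on the integrality of~$m$ and on $D$ being a supremum handles the facts that $R$ is only ever strictly below~$D$ and that the supremum in Definition~\ref{dfn:d0e} may fail to be attained.

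The step I expect to need the most care is setting up the randomized decoder so that its conditional error probability is \emph{exactly} an expectation of $\Phi$ against the appropriate measure --- in particular the independence argument that, for $j\neq1$, turns $\Pr[B_j=1]$ into an integral against the \emph{product} measure $P_X\times P_Y$ rather than against $P_{XY}$. Everything else is routine: the passage to the supremum~$D$, and the measurability of $\Phi$ (implicit in Definition~\ref{dfn:d0e}), which is all that is needed for the construction to go through for arbitrary input and output alphabets.
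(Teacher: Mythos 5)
Your proof is correct and follows essentially the same route as the paper: random coding with i.i.d.\ codewords, a randomized decoder that ``accepts'' message $j$ with probability $\Phi(X_j,y)$ for a near-optimal $\Phi$ from Definition~\ref{dfn:d0e}, and a union bound separating the event that the true message is not selected (probability $\le\err'$) from the event that a wrong message is selected (probability $\le(m-1)\int\Phi\,\d(P_X\times P_Y)$). The only cosmetic differences are your explicit auxiliary uniforms $U_i$ to realize the randomized decoder (which also makes the derandomization cleaner than the paper's appeal to a footnote) and your explicit limiting argument $R\uparrow D$ in place of the paper's direct passage to the supremum over $\Phi$.
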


  The proof of Theorem~\ref{thm:achievability} can be thought of as a
  generalization of Shannon's 
  original achievability proof \cite{shannon48}. We use random coding
  as in 
  \cite{shannon48}; for the decoder, we generalize Shannon's
  typicality decoder to allow, instead of the ``indicator'' for the
  jointly 
  typical set, an arbitrary function on input-output pairs.

\begin{proof}
  For any distribution $P_X$ on $\set{X}$ and any $m\in\Integers^+$, we
  randomly generate a codebook of size $m$ such that the $m$
  codewords are independent and identically distributed according
  to $P_X$. We shall show that, for any 
  $\err'$, there exists a decoding rule associated with each
  codebook such that the average
  probability of a decoding error averaged over all such codebooks satisfies
  \begin{equation}\label{eq:lower1}
    \Pr(\textnormal{error}) \le (m-1) \cdot 2^{-D_0^{\err'}(P_{XY}\|
      P_X\times P_Y)} + \err'.
  \end{equation}
  Then there exists at
  least one codebook whose average probability of error is
  upper-bounded by the right hand side (RHS) of
  \eqref{eq:lower1}. That this codebook 
  satisfies \eqref{eq:achievability} follows by
  rearranging terms in \eqref{eq:lower1}.

  We shall next prove \eqref{eq:lower1}. For a given codebook and any
  $\Phi:\set{X}\times\set{Y}\to [0,1]$ which satisfies
  \begin{equation}\label{eq:lowerz}
    \int_{\set{X}\times\set{Y}} \Phi \d P_{XY} \ge 1-\err',
  \end{equation}
  we use the following
  random decoding rule:\footnote{It is well-known that, for the
    channel model considered in this paper, the average
    probability of error cannot be improved by allowing random
    decoding rules.} when $y$ is received, select some or none of the
  messages such that message~$j$ is selected with probability
  $\Phi(f(j),y)$ independently of the other messages. If only one message
  is selected, output this message; otherwise declare an error.

  To analyze the error probability, suppose $i$ was the transmitted
  message. The error event is the 
  union of $\set{E}_1$ and $\set{E}_2$, where $\set{E}_1$ denotes the
  event that some message other than~$i$ is selected; $\set{E}_2$ denotes the
  event that message $i$ is not selected.
  
  We first bound $\Pr(\set{E}_1)$ averaged over all codebooks. Fix
  $f(i)$ and $y$. The probability averaged over all codebooks of
  selecting a particular message other than~$i$ is given by
  \begin{equation*}
    \int_{\set{X}} \Phi(x,y)P_X(\dd x).
  \end{equation*}
  Since there are $(m-1)$ such messages, we can use the union bound to obtain
  \begin{equation}
    \E{\Pr(\set{E}_1 | f(i),y)} \le (m-1) \cdot \int_\set{X} \Phi(x,y)
    P_X(\dd x).\label{eq:lowera1} 
  \end{equation}
  Since the RHS of \eqref{eq:lowera1} does not
  depend on $f(i)$, we further have
  \begin{equation*}
    \E{\Pr(\set{E}_1 | y)} \le (m-1) \cdot \int_\set{X} \Phi(x,y) P_X(\dd x).
  \end{equation*}
  Averaging this inequality over $y$ gives
  \begin{IEEEeqnarray*}{rCl}
    \E{\Pr(\set{E}_1)} & \le & (m-1)  \int_{\set{Y}} \left(
      \int_\set{X} \Phi(x,y)P_X(\dd x) \right) P_Y (\dd y) \\ 
    & = & (m-1)  \int_{\set{X}\times\set{Y}} \Phi \d\left(P_X
      \times P_Y \right). \IEEEyesnumber \label{eq:lower21}
  \end{IEEEeqnarray*}
  On the other hand, the probability of $\set{E}_2$ averaged over all
  generated codebooks can be bounded as
  \begin{IEEEeqnarray}{rCl}
		\E{\Pr(\set{E}_2)} & = & \int_{\set{X}\times\set{Y}}
                (1-\Phi)\d P_{XY} \nonumber \\
		& \le &  \err' . \label{eq:lower22}
  \end{IEEEeqnarray}
  Combining \eqref{eq:lower21} and \eqref{eq:lower22} yields
  \begin{equation}\label{eq:lower3}
    \Pr(\textnormal{error}) \le (m-1) \int_{\set{X}\times\set{Y}} \Phi \d\left(P_X
      \times P_Y \right) + \err'.
  \end{equation}
  Finally, since \eqref{eq:lower3} holds for every
  $\Phi$ satisfying \eqref{eq:lowerz}, we
  establish  
  \eqref{eq:lower1} and thus conclude the proof of Theorem
  \ref{thm:achievability}.
\end{proof}

\section{Asymptotic Analysis}\label{sec:asymp}

In this section we use the new bounds to study the capacity of a
channel whose structure can be arbitrary. Such a channel is described by 
stochastic kernels from $\set{X}^n$ to $\set{Y}^n$ for all
$n\in\Integers^+$, where $\set{X}$ and
$\set{Y}$ are the input and output alphabets, respectively. An
\emph{$(n,M,\err)$-code} on a
channel consists of an encoder and a decoder such that a message of
size $M$ can be transmitted by mapping it to an element of
$\set{X}^n$ while 
the probability of error is no larger than $\err$. The capacity and the
optimistic capacity \cite{vembuverdusteinberg95} of a channel are
defined as follows. 

\begin{dfn}[Capacity and Optimistic Capacity]\label{dfn:capacity}
  The \emph{capacity} $C$ of a channel is the supremum over all $R$
  for which there exists a sequence of $(n,M_n,\err_n)$-codes such that
  \begin{equation}\label{eq:capacity}
    \frac{\log M_n}{n}\ge R,\quad n \in \Integers^+
  \end{equation}
  and
  \begin{equation*}
    \lim_{n\to\infty}\err_n=0.
  \end{equation*}
  The \emph{optimistic capacity} $\overline{C}$ of a channel is the
  supremum over all $R$ for which there exists a sequence of
  $(n,M_n,\err_n)$-codes such that \eqref{eq:capacity} holds and
  \begin{equation*}
    \varliminf_{n\to\infty}\err_n=0.
  \end{equation*}
\end{dfn}



Given Definition \ref{dfn:capacity}, the next theorem is an immediate
consequence of 
Theorems~\ref{thm:converse} and~\ref{thm:achievability}.

\begin{thm}[Capacity Formulas]
  Any channel satisfies
  \begin{IEEEeqnarray}{lCl}
    C & = & \lim_{\err\downarrow 0}
    \varliminf_{n\to\infty} \frac{1}{n} \sup_{P_{X^n}}
    D_0^\err\left(P_{X^nY^n}\| 
      P_{X^n}\times P_{Y^n}\right),\label{eq:verduhan}\\
    \overline{C} & = & \lim_{\err\downarrow 0}
    \varlimsup_{n\to\infty} \frac{1}{n} \sup_{P_{X^n}}
    D_0^\err\left(P_{X^nY^n}\| 
      P_{X^n}\times P_{Y^n}\right).\label{eq:chenalajaji1}
  \end{IEEEeqnarray}
\end{thm}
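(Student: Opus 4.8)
The plan is to derive both capacity formulas by combining the converse (Theorem~\ref{thm:converse}) and achievability (Theorem~\ref{thm:achievability}) bounds, applied to the $n$-th blocklength channel, and then taking the appropriate asymptotic limits. For notational convenience, write $A_n(\err) \triangleq \sup_{P_{X^n}} D_0^\err\left(P_{X^nY^n}\| P_{X^n}\times P_{Y^n}\right)$, so that the claim is $C = \lim_{\err\downarrow 0}\varliminf_{n\to\infty}\tfrac{1}{n}A_n(\err)$ and $\overline{C} = \lim_{\err\downarrow 0}\varlimsup_{n\to\infty}\tfrac{1}{n}A_n(\err)$. Note first that $A_n(\err)$ is monotonically nondecreasing in $\err$ by Lemma~\ref{lem:DPI} Part~1), so the outer limits $\lim_{\err\downarrow 0}$ exist (as infima over $\err>0$).

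First I would prove $C \le \lim_{\err\downarrow 0}\varliminf_{n\to\infty}\tfrac{1}{n}A_n(\err)$. Take any achievable rate $R < C$; by definition there is a sequence of $(n,M_n,\err_n)$-codes with $\tfrac{1}{n}\log M_n \ge R$ and $\err_n\to 0$. Applying Theorem~\ref{thm:converse} to the $n$-th channel gives $\log M_n \le A_n(\err_n)$. Fix any $\err>0$; since $\err_n\to 0$, we have $\err_n \le \err$ for all large $n$, and monotonicity gives $A_n(\err_n)\le A_n(\err)$, hence $R \le \tfrac{1}{n}A_n(\err)$ for all large $n$, so $R \le \varliminf_{n\to\infty}\tfrac{1}{n}A_n(\err)$. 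Since this holds for every $\err>0$ and the RHS is monotonic in $\err$, $R \le \lim_{\err\downarrow 0}\varliminf_{n\to\infty}\tfrac{1}{n}A_n(\err)$, and taking the supremum over $R<C$ finishes this direction. The reverse inequality $C \ge \lim_{\err\downarrow 0}\varliminf_{n\to\infty}\tfrac{1}{n}A_n(\err)$ uses Theorem~\ref{thm:achievability}: denote the RHS by $C^\star$ and fix any $R < C^\star$. Pick $\err>0$ small; then $\varliminf_{n\to\infty}\tfrac{1}{n}A_n(\err/2) \ge R + \eta$ for some $\eta>0$ once $\err$ is small enough (using that $C^\star$ is the limit as $\err\downarrow0$ and that the $\varliminf$ is nondecreasing in $\err$). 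Applying Theorem~\ref{thm:achievability} to the $n$-th channel with error parameters $\err$ and $\err'=\err/2$ produces, for each $n$, an $(n,M_n,\err)$-code with $\log M_n \ge A_n(\err/2) - \log\tfrac{2}{\err}$; for all large $n$ this gives $\tfrac{1}{n}\log M_n \ge R + \eta - \tfrac{1}{n}\log\tfrac{2}{\err} \ge R$, so by choosing $\err_n = \err$ fixed we get a rate-$R$ code sequence with $\err_n\to\err$...

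Here the subtlety, and the main obstacle, is that achievability as just used only yields $\err_n$ constant and equal to $\err$, not $\err_n\to 0$; to get a sequence with $\err_n\to 0$ one must let $\err$ shrink along the sequence. The standard fix is a diagonalization: choose a decreasing sequence $\err^{(k)}\downarrow 0$; for each $k$ there is $N_k$ so that $\tfrac{1}{n}A_n(\err^{(k)}/2) - \tfrac{1}{n}\log\tfrac{2}{\err^{(k)}} \ge R$ for all $n\ge N_k$ (possible because $R < C^\star \le \varliminf_n \tfrac1n A_n(\err^{(k)}/2)$ for every $k$); WLOG $N_k$ strictly increasing, and set $\err_n = \err^{(k)}$ for $N_k \le n < N_{k+1}$. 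Then $\err_n\to 0$ and the achievability codes for blocklength $n$ have rate at least $R$, exhibiting $R$ as achievable, so $C\ge R$; letting $R\uparrow C^\star$ gives $C\ge C^\star$. The optimistic-capacity formula \eqref{eq:chenalajaji1} is proved in exactly the same way with $\varliminf_{n\to\infty}$ replaced by $\varlimsup_{n\to\infty}$ throughout and with ``for all large $n$'' replaced by ``for infinitely many $n$'' in both the converse and achievability arguments; the diagonalization in the achievability direction is replaced by extracting, for each $k$, a single blocklength $n_k$ (with $n_k\to\infty$) along which the bound holds, and setting $\err_{n_k}=\err^{(k)}$ on that subsequence, giving $\varliminf_k \err_{n_k}=0$ as required by Definition~\ref{dfn:capacity}. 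I expect the bookkeeping in this diagonalization — ensuring the chosen error parameters go to zero while the per-blocklength rate stays above $R$, and handling the $\log\tfrac1{\err-\err'}$ penalty which stays bounded for fixed $\err$ but must be absorbed as $\err$ shrinks — to be the only genuinely delicate point; everything else is monotonicity of $D_0^\err$ in $\err$ and direct application of Theorems~\ref{thm:converse} and~\ref{thm:achievability}.
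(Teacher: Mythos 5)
Your proof is correct and takes exactly the approach the paper intends: the paper states this theorem as an ``immediate consequence'' of Theorems~\ref{thm:converse} and~\ref{thm:achievability} with no separate proof given, and you have simply supplied the standard limiting and diagonalization bookkeeping that the authors leave implicit. One small remark: since $\varliminf_n \tfrac1n A_n(\err)$ is nondecreasing in $\err$ and its limit as $\err\downarrow 0$ is $C^\star$, the inequality $\varliminf_n\tfrac1n A_n(\err/2)\ge C^\star>R$ in fact holds for \emph{every} $\err>0$, so the qualifier ``once $\err$ is small enough'' in your achievability step is unnecessary (though harmless).
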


{\emph{Remark:}
According to Lemma~\ref{lem:infrate}, \eqref{eq:verduhan} is
equivalent to \cite[(1.4)]{verduhan94}. It can also be shown that
\eqref{eq:chenalajaji1} is equivalent to \cite[Theorem 4.4]{chenalajaji99}. }

We can also use Theorems~\ref{thm:converse}
and~\ref{thm:achievability} to study the $\err$-capacities which are
usually defined as follows (see, for example,
\cite{verduhan94,chenalajaji99}). 

\begin{dfn}[$\err$-Capacity and Optimistic $\err$-Capacity]
  The \emph{$\err$-capacity} $C_\err$ of a channel is the supremum over all $R$
  such that, for every large enough $n$, there exists an
  $(n,M_n,\err)$-code satisfying 
  \begin{equation*}
    \frac{\log M_n}{n}\ge R.
  \end{equation*}
  The \emph{optimistic $\err$-capacity} $\overline{C}_\err$ of a
  channel is the supremum 
  over all $R$ for which there exist $(n, M_{n},
  \err)$-codes for infinitely many $n$s satisfying
  \begin{equation*}
    \frac{\log M_n}{n}\ge R.
  \end{equation*}
\end{dfn}

The following bounds on the $\err$-capacity and optimistic
$\err$-capacity of a channel are immediate consequences of
Theorems~\ref{thm:converse} and~\ref{thm:achievability}. They can be
shown to be equivalent to those in 
\cite[Theorem 6]{verduhan94}, \cite[Theorem 7]{steinberg98} and
\cite[Theorem 4.3]{chenalajaji99}. As in those previous results, the
bounds for $C_\err$ ($\overline{C}_\err$) coincide except possibly at
the points of discontinuity of $C_\err$ ($\overline{C}_\err$).

\begin{thm}[Bounds on $\err$-Capacities]
  For any channel and any $\err \in (0,1)$, the $\err$-capacity of the channel
  satisfies
  \begin{IEEEeqnarray*}{rCl}
    C_\err & \le & 
    \varliminf_{n\to\infty} \frac{1}{n} \sup_{P_{X^n}}
    D_0^\err\left(P_{X^nY^n}\| 
      P_{X^n}\times P_{Y^n}\right),\\
    C_\err & \ge & \lim_{\err'\uparrow \err}
    \varliminf_{n\to\infty} \frac{1}{n} \sup_{P_{X^n}}
    D_0^{\err'}\left(P_{X^nY^n}\| 
      P_{X^n}\times P_{Y^n}\right);
  \end{IEEEeqnarray*}
  and the optimistic $\err$-capacity of the channel satisfies
  \begin{IEEEeqnarray*}{rCl}
    \overline{C}_\err & \le & 
    \varlimsup_{n\to\infty} \frac{1}{n} \sup_{P_{X^n}}
    D_0^\err\left(P_{X^nY^n}\| 
      P_{X^n}\times P_{Y^n}\right),\\
    \overline{C}_\err & \ge & \lim_{\err'\uparrow \err}
    \varlimsup_{n\to\infty} \frac{1}{n} \sup_{P_{X^n}} D_0^{\err'}\left(P_{X^nY^n}\|
      P_{X^n}\times P_{Y^n}\right).
  \end{IEEEeqnarray*}
\end{thm}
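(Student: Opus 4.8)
The plan is to deduce everything from the single-shot bounds in Theorems~\ref{thm:converse} and~\ref{thm:achievability}, applied to the block-length-$n$ channels $P_{Y^n|X^n}$, together with the monotonicity of $D_0^\smooth(\cdot\|\cdot)$ in $\smooth$ (Lemma~\ref{lem:DPI}, Part~1)). No genuinely new argument is needed; the proof is bookkeeping with the definitions. The two points that require a little care are (i) the distinction between ``for every sufficiently large $n$'' (which governs $C_\err$) and ``for infinitely many $n$'' (which governs $\overline{C}_\err$), and (ii) the penalty term $\log\frac{1}{\err-\err'}$ in Theorem~\ref{thm:achievability}, which must be absorbed in the limit $n\to\infty$ and forces one to work with an auxiliary parameter $\err'<\err$ and to take a final limit $\err'\uparrow\err$.

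\emph{Converse bounds.} An $(n,M_n,\err)$-code is an $(M_n,\err)$-code for the channel $P_{Y^n|X^n}$, so Theorem~\ref{thm:converse} gives $\frac{1}{n}\log M_n\le \frac{1}{n}\sup_{P_{X^n}}D_0^\err(P_{X^nY^n}\|P_{X^n}\times P_{Y^n})$ whenever such a code exists. If $R<C_\err$, such codes exist for all large $n$, so $R$ is bounded by the right-hand side for all large $n$, hence $R\le\varliminf_{n\to\infty}\frac1n\sup_{P_{X^n}}D_0^\err(P_{X^nY^n}\|P_{X^n}\times P_{Y^n})$; taking the supremum over admissible $R$ gives the bound on $C_\err$. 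If instead $R<\overline{C}_\err$, codes with rate at least $R$ exist for infinitely many $n$, so $R$ is bounded by the right-hand side along a subsequence, hence $R\le\varlimsup_{n\to\infty}\frac1n\sup_{P_{X^n}}D_0^\err(P_{X^nY^n}\|P_{X^n}\times P_{Y^n})$, which gives the bound on $\overline{C}_\err$.

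\emph{Achievability bounds.} Fix $\err'\in[0,\err)$ and apply Theorem~\ref{thm:achievability} to $P_{Y^n|X^n}$: for every $n$ there is an $(n,M_n,\err)$-code with $\frac1n\log M_n\ge \frac1n\sup_{P_{X^n}}D_0^{\err'}(P_{X^nY^n}\|P_{X^n}\times P_{Y^n})-\frac1n\log\frac{1}{\err-\err'}$. Since $\err'$ is held fixed, the last term tends to $0$. Thus for any $R<\varliminf_{n\to\infty}\frac1n\sup_{P_{X^n}}D_0^{\err'}(P_{X^nY^n}\|P_{X^n}\times P_{Y^n})$ and any $\gamma>0$ we get $\frac1n\log M_n\ge R-\gamma$ for all large $n$, so $R-\gamma\le C_\err$; letting $\gamma\downarrow0$ and then $R$ increase to the $\varliminf$ shows $\varliminf_{n\to\infty}\frac1n\sup_{P_{X^n}}D_0^{\err'}(P_{X^nY^n}\|P_{X^n}\times P_{Y^n})\le C_\err$. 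By Lemma~\ref{lem:DPI}, Part~1), this expression is nondecreasing in $\err'$, so $\lim_{\err'\uparrow\err}$ of it exists in $[0,\infty]$, and taking that limit gives the stated lower bound on $C_\err$. For $\overline{C}_\err$ one argues identically along a subsequence: among the infinitely many $n$ with $\frac1n\sup_{P_{X^n}}D_0^{\err'}(P_{X^nY^n}\|P_{X^n}\times P_{Y^n})>R$, restrict to those that also satisfy $\frac1n\log\frac1{\err-\err'}<\gamma$ (all but finitely many do), obtaining $\frac1n\log M_n>R-\gamma$ infinitely often, hence $R-\gamma\le\overline{C}_\err$; then let $\gamma\downarrow0$, $R\uparrow\varlimsup$, and $\err'\uparrow\err$.

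I do not expect a serious obstacle, since the mathematical content is entirely contained in Theorems~\ref{thm:converse} and~\ref{thm:achievability}. The only delicate point is that one cannot set $\err'=\err$ in the achievability bound, so the limit $\err'\uparrow\err$ is unavoidable; this is precisely the source of the potential gap between the converse and achievability expressions---they differ only in having $\smooth=\err$ versus $\smooth\uparrow\err$---and hence of the stated caveat that the two coincide except at the points of discontinuity of $C_\err$ (respectively $\overline{C}_\err$).
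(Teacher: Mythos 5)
Your proof is correct and follows exactly the route the paper intends when it says the theorem is an ``immediate consequence'' of Theorems~\ref{thm:converse} and~\ref{thm:achievability}: apply the single-shot bounds to $P_{Y^n|X^n}$, observe that the $\frac{1}{n}\log\frac{1}{\err-\err'}$ penalty vanishes as $n\to\infty$, take the appropriate $\varliminf$/$\varlimsup$ according to the ``for all large $n$'' vs.\ ``for infinitely many $n$'' quantifier, and finally let $\err'\uparrow\err$ using monotonicity of $D_0^\smooth$ in $\smooth$. Nothing is missing.
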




\section{Numerical Comparison with Existing Bounds for the
  BSC}\label{sec:compare}

In this section we compare the new achievability bound obtained in
this paper with the bounds by Gallager \cite{gallager65} and
Polyanskiy et al. \cite{polyanskiypoorverdu08}. We consider the
memoryless binary 
symmetric channel (BSC) with crossover probability $0.11$. Thus, for
$n$ channel uses, the
input and output alphabets are both $\{0,1\}^n$ and the
channel law is given by
\begin{equation*}
  P_{Y^n|X^n}(y^n|x^n) = 0.11^{|y^n-x^n|}0.89^{n-|y^n-x^n|},
\end{equation*}
where $|\cdot|$ denotes the Hamming weight of a binary vector. The
average block-error rate is chosen to be $10^{-3}$.

In the calculations of all three achievability bounds we choose
$P_{X^n}$ to be uniform on $\{0,1\}^n$. For comparison we include the
plot of the converse used 
in~\cite{polyanskiypoorverdu08}. Our new 
converse bound involves optimization over input distributions and is
thus difficult to compute. In fact, in this example it is less tight
compared to the 
one in~\cite{polyanskiypoorverdu08} since for the uniform
input distribution $D_0^{0.001}(P_{X^nY^n}\| P_{X^n}\times P_{Y^n})$
coincides with the latter. 

Comparison of the curves is shown
in Figure~\ref{fig:bsc}.
\begin{figure}[h]
  \centering
  \includegraphics[width=0.45\textwidth]{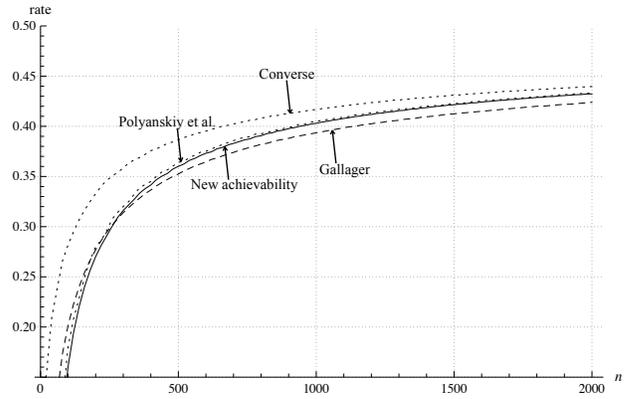}
  \caption{Comparison of the new achievability bound with
    Gallager~\cite{gallager65} and Polyanskiy et
    al.~\cite{polyanskiypoorverdu08} for the BSC with crossover
    probability $0.11$ and average block-error rate
    $10^{-3}$. The converse is the one used 
    in~\cite{polyanskiypoorverdu08}. \label{fig:bsc}}
\end{figure}
For the example we consider, the new achievability is always less tight
than the one in~\cite{polyanskiypoorverdu08}, though the difference is
small. It outperforms Gallager's bound for large block-lengths.

\appendix

In this appendix we prove Lemma~\ref{lem:infrate}. We first show that
\begin{equation}\label{eq:proofinf1}
	\lim_{ \smooth \downarrow 0} \varliminf_{n \to \infty} \frac{1}{n} D_0^{\smooth} \left(P_n \| Q_n \right) \ge \left\{ P_n \right\}\textnormal{-}\varliminf_{n \to \infty} \frac{1}{n} \log \frac{\d P_n}{\d Q_n}.
\end{equation}
To this end, consider any $a$ satisfying
\begin{equation}\label{eq:proofinf2}
	0 < a < \left\{ P_n \right\}\textnormal{-}\varliminf_{n \to \infty} \frac{1}{n} \log \frac{\d P_n}{\d Q_n}.
\end{equation}
Let $\set{A}_n(a) \in \set{F}_n$, $n\in\Naturals$, be the union of all measurable sets on which
\begin{equation}\label{eq:proofinf7}
	\frac{1}{n}\log \frac{\d P_n}{\d Q_n} \ge a.
\end{equation}
Let $\Phi_n:\Omega_n\to [0,1]$, $n\in\Naturals$, equal $1$ on $\set{A}_n(a)$ and equal $0$
elsewhere, then by \eqref{eq:proofinf2} we have
\begin{equation}\label{eq:proofinf8}
	\lim_{n\to\infty} \int_{\Omega_n}\Phi_n \d P_n =
        \lim_{n\to\infty} P_n\left(\set{A}_n(a)\right) = 1.
\end{equation}
Thus we have
\begin{IEEEeqnarray*}{rCl}
	\lefteqn{\lim_{\smooth\downarrow 0} \varliminf_{n\to\infty} \frac{1}{n}
        D_0^\smooth \left(P_n\|Q_n\right)}~~~~~~~~~\\
        &\ge& \varliminf_{n\to\infty} \left(- \frac{1}{n} \log
          \int_{\Omega_n} \Phi_n \d Q_n \right) \\
 	& \ge & \varliminf_{n\to\infty} \left(- \frac{1}{n} \log
          \int_{\Omega_n} \Phi_n \d P_n \cdot 2^{-na} \right) \\
	& = & \lim_{n\to\infty} \left(-\frac{1}{n} \log
          \left(2^{-na}\right) \right)\\ 
	& = & a, \IEEEyesnumber \label{eq:proofinf3}
\end{IEEEeqnarray*}
where the first inequality follows because, according to
\eqref{eq:proofinf8}, for any $\smooth>0$, $\int_{\Omega_n} \Phi_n \d
P_n = P_n\left(\set{A}_n(a)\right) \ge 1-\delta$ for
large enough $n$; the second inequality by \eqref{eq:proofinf7} and
the fact that $\Phi_n$ is zero outside $\set{A}_n(a)$; the
next equality by \eqref{eq:proofinf8}. Since \eqref{eq:proofinf3}
holds for every $a$ 
satisfying \eqref{eq:proofinf2}, we obtain \eqref{eq:proofinf1}. 

We next show the other direction, namely, we show that
\begin{equation}\label{eq:proofinf21}
	\lim_{\smooth \downarrow 0} \varliminf_{n \to \infty} \frac{1}{n} D_0^{\smooth} \left(P_n \| Q_n \right) \le \left\{ P_n \right\}\textnormal{-}\varliminf_{n \to \infty} \frac{1}{n} \log \frac{\d P_n}{\d Q_n}.
\end{equation}
To this end, consider any
\begin{equation}\label{eq:proofinf22}
	b > \{P_n\} -\varliminf_{n\to\infty} \frac{1}{n} \log \frac{\d P_n}{\d Q_n}.
\end{equation}
Let $\set{A}_n'(b)$, $n\in\Naturals$, be the union of all measurable sets on which
\begin{equation}\label{eq:proofinf23}
	\frac{1}{n}\log\frac{\d P_n}{\d Q_n} \le b.
\end{equation}
By \eqref{eq:proofinf22} we have that there exists some $c \in (0,1]$ such that
\begin{equation}\label{eq:proofinf24}
	\varlimsup_{n\to\infty} P_n\left(\set{A}_n'(b)\right) = c.
\end{equation}
For every $\smooth \in (0,c)$, consider any sequence of
$\Phi_n:\Omega_n\to [0,1]$ satisfying
\begin{equation}\label{eq:proofinf25}
	\int_{\Omega_n} \Phi_n \d P_n \ge 1-\smooth,\quad n \in \Naturals.
\end{equation}
Combining \eqref{eq:proofinf24} and \eqref{eq:proofinf25} yields
\begin{equation}\label{eq:proofinf26}
	\varlimsup_{n\to\infty} \int_{\set{A}_n'(b)} \Phi_n \d P_n \ge c-\smooth.
\end{equation}
On the other hand, from \eqref{eq:proofinf23} it follows that
\begin{equation}\label{eq:proofinf27}
	\int_{\set{A}_n'(b)} \Phi_n \d Q_n \ge \int_{\set{A}_n'(b)}
        \Phi_n \d P_n \cdot 2 ^{-nb}. 
\end{equation}
Combining \eqref{eq:proofinf26} and \eqref{eq:proofinf27} yields
\begin{equation*}
	\varliminf_{n\to\infty} \left( - \frac{1}{n} \log \int_{\set{A}_n'(b)}
        \Phi_n \d Q_n \right) \le b.  
\end{equation*}
Thus we obtain that for every $\smooth\in(0,c)$ and every sequence
$\Phi_n:\Omega_n\to [0,1]$ satisfying \eqref{eq:proofinf25}, 
\begin{IEEEeqnarray*}{rCl}
	\lefteqn{\varliminf_{n\to\infty} \left( -\frac{1}{n} \log
            \int_{\Omega_n} 
        \Phi_n \d Q_n \right)}~~~~~~~~~~~~~\\
        & \le & \varliminf_{n\to\infty} \left(
        -\frac{1}{n} 
        \log \int_{\set{A}_n'(b)} \Phi_n \d Q_n \right) \\  
	& \le & b.
\end{IEEEeqnarray*}
This implies that, for every $\smooth \in (0,c)$,
\begin{equation}\label{eq:proofinf29}
	\varliminf_{n\to\infty} \frac{1}{n}D_0^\smooth \left(P_n \| Q_n \right) \le b.
\end{equation}
Inequality \eqref{eq:proofinf29} still holds when we take the limit $\smooth\downarrow 0$. Since this is true for every $b$ satisfying \eqref{eq:proofinf22}, we establish \eqref{eq:proofinf21}.

Combining \eqref{eq:proofinf1} and \eqref{eq:proofinf21} proves \eqref{eq:infrate}.

Finally, \eqref{eq:iidrate} follows from \eqref{eq:infrate} because, by the law of large numbers,
\begin{equation}
	\frac{1}{n}\log \frac{\d (P^{\times n}) }{\d (Q^{\times n}) }
        \to \E{\log \frac{\d P}{\d Q}} = D(P\| Q) 
\end{equation}
as $n\to\infty$ $P^{\times n}$-almost surely. This completes the proof of
Lemma~\ref{lem:infrate}. 

\section*{Acknowledgment}
  RR acknowledges support from the Swiss National Science Foundation (grant No. 200021-119868).


\bibliographystyle{IEEEtran}           
\bibliography{/Volumes/Data/wang/Library/texmf/tex/bibtex/header_short,/Volumes/Data/wang/Library/texmf/tex/bibtex/bibliofile}

\end{document}